\newcommand{\E}[1]{\mathbb{E}\left[ #1 \right]} 
\newcommand{\eps}{\varepsilon} 
\newcommand{\pbr}[1]{\left( #1 \right)} 
\newcommand{\cbr}[1]{\left\{ #1\right\}}
\newcommand{\stepa}[1]{\stackrel{\text{(a)}}{#1}}
\newcommand{\co}{\overline{c}}
\newcommand{\cu}{\underline{c}}
\newcommand{\sco}{\overline{s}_c}
\newcommand{\scu}{\underline{s}_c}
\newcommand{\so}{\overline{s}}
\newcommand{\su}{\underline{s}}
\newcommand{\rh}{\hat{r}}
\newtheorem{remark}{Remark}
\newtheorem{proposition}{Proposition}
\newtheorem{theorem}{Theorem}
\title{\LARGE \bf
On the Gaussian Limit of the Output of IIR Filters
}
\author{Yashaswini Murthy, Bassam Bamieh and R. Srikant
\thanks{This work was not supported by any organization}
\thanks{Yashaswini Murthy {\tt\small (ymurthy2@illinois.edu)} and R. Srikant {\tt\small (rsrikant@illinois.edu)} are both with the Electrical and Computer Engineering Department and the Coordinated Science Lab at University of Illinois Urbana-Champaign.
   }%
\thanks{Bassam Bamieh {\tt\small (bamieh@ucsb.edu)} is with the Department of Mechanical Engineering, University of California Santa Barbara.
        }%
}
\begin{document}

\maketitle
\thispagestyle{empty}
\pagestyle{empty}

\begin{abstract}

   
   We study the asymptotic distribution of the output of a stable Linear Time-Invariant (LTI) system driven by a non-Gaussian stochastic input. Motivated by longstanding heuristics in the stochastic describing function method, we rigorously characterize when the output process becomes approximately Gaussian, even when the input is not. Using the Wasserstein-1 distance as a quantitative measure of non-Gaussianity, we derive upper bounds on the distance between the appropriately scaled output and a standard normal distribution. These bounds are obtained via Stein’s method and depend explicitly on the system’s impulse response and the dependence structure of the input process. We show that when the dominant pole of the system approaches the edge of stability and the input satisfies one of the following conditions—(i) independence, (ii) positive correlation with a real and positive dominant pole, or (iii) sufficient correlation decay—the output converges to a standard normal distribution at rate \( O(1/\sqrt{t}) \). We also present counterexamples where convergence fails, thereby motivating thestated assumptions. Our results provide a rigorous foundation for the widespread observation that outputs of low-pass LTI systems tend to be approximately Gaussian.

\end{abstract}

\section{INTRODUCTION}

A large class of nonlinear time-invariant systems 
can be modeled as shown in Figure~\ref{LTI_fdbk_N.fig} as a Linear Time-Invariant (LTI) system in feedback 
with a memoryless nonlinearity $N(.)$. The well-known describing function method~\cite{Atherton,vander1968multiple} 
predicts periodic oscillation solutions based on the assumption that the LTI part has a ``low-pass character'', and that 
higher harmonics in the output of $N(.)$ can be neglected. Parameters of periodic (limit cycle) solutions can then be 
obtained from algebraic equations involving the frequency response of the LTI part together with the describing 
function of $N(.)$.

A stochastic analog of this method concerns computing the 2nd-order statistics of the output process $z$ given those 
of the input process $d$ and the dynamics. The method as outlined in~\cite[Chapter 7]{vander1968multiple} can be 
roughly described as follows. If  the LTI part ($H$ in Figure~\ref{LTI_fdbk_N.fig}) has a 
``sufficiently low-pass character'', it has long been observed that the output processes $z$ and $v$ are ``nearly Gaussian''. 
Assuming $v$ to be Gaussian then enables computing the 2nd-order statistics of the output $w$ of the nonlinearity $N(.)$. This 
can be done for example with Hermite polynomials when $N(.)$ is itself a polynomial nonlinearity~\cite[Appendix H]{vander1968multiple}. 
In turn, given the 2nd-order statistics of $d$ and $w$, e.g. the Power Spectral Densities (PSDs) (or the autocorrelation functions)
 of the outputs $z$ and $v$ can be 
computed from the frequency response (or impulse response) of $H$. This gives a set of algebraic equations for the 2nd-order 
statistics of $w,v,z$ given those of $d$. The equations can then be solved by several techniques such as fixed-point iterations.

\begin{figure}[t]
	\centering
	\includegraphics[width=0.3\textwidth]{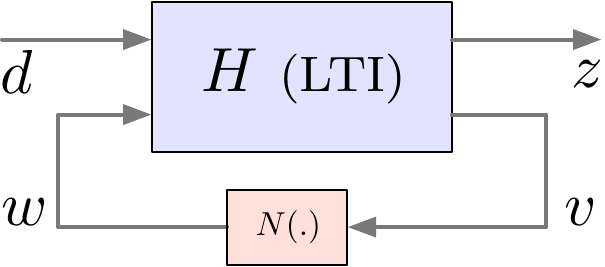} 
	
	\caption{A large class of nonlinear time-invariant dynamic systems (from $d$ to $z$) can be modeled as a Linear Time-Invariant 
		(LTI) system in feedback with a memoryless nonlinearity $N(.)$. 
		} 
  \label{LTI_fdbk_N.fig}
\end{figure} 

The key to the stochastic version of the method is the informal observation that outputs of low-pass type LTI systems are nearly 
Gaussian even if the input processes are not.  However, as stated in~\cite[pg. 641]{vander1968multiple} ``Unfortunately, it does
not seem possible to phrase this general statement in quantitative terms.'' Despite the passage of time, this issue has not been 
quantitatively addressed in the literature to the author's best knowledge. 
The question is of some importance since numerical experiments indicate 
that the success and convergence of the aforementioned fixed-point iterations appear to depend heavily on how close 
the output processes are to being Gaussian. 

In this paper, we address this question directly using the Wasserstein-1 distance as a quantitative measure of how far the output 
processes are from being Gaussian. Specifically, we
 consider a stable linear system of the form $y_t = \sum_{i=1}^t G_{t-i}u_i$, where the input $\{u_i\}$ is a noise sequence and $y_t$ is the output of the linear system. We show that, when appropriately scaled, \( y_t \) converges in distribution to a standard normal random variable $Z$ with an asymptotic convergence rate of \( O(1/\sqrt{t}) \), provided the dominant pole of the system response \( G \) approaches 1, under the following conditions:
\begin{itemize}
    \item[(i)] The sequence \( \{u_i\} \) is independent;
    \item[(ii)] The \( u_i \) are positively correlated, and the system \( G \) has a dominant pole that is real and positive;
    \item[(iii)] The sequence \( \{u_i\} \) exhibits sufficient correlation decay;
\end{itemize}
Conversely, we identify specific examples of input sequences \( \{u_i\} \) for which \( y_t / \sigma_t \not\!\xrightarrow{d} Z \), thereby demonstrating that convergence to a standard normal distribution is not guaranteed for arbitrary inputs.

The key idea is to use the standard upper bound on the Wasserstein-1 distance between the scaled output and the standard Gaussian random variable, derived via Stein's method \cite{barbour2005introduction,chen2010normal,chatterjee2014shortsurveysteinsmethod}. We then show that the upper bound can be explicitly characterized in terms of the dominant root of the system's transfer function.  Finally, by analyzing the asymptotic behavior of the resulting expression, we establish the desired convergence rate.

The remainder of the paper is organized as follows. In \Cref{sec:Model}, we introduce the system model under consideration. \Cref{section:mainresults} presents the main theorem statements. Selected proofs are provided in \Cref{sec:Proofs}, while the remaining proofs are deferred to the Appendix. The final section concludes with a summary of results and potential directions for future work.

\section{MODEL AND PRELIMINARIES}
\label{sec:Model}
Consider a linear time-invariant (LTI) system \( G \) with input signal \( \{u_t\}_{t \geq 1} \subset \mathbb{R} \) and scalar output signal \( \{y_t\}_{t \geq 1} \subset \mathbb{R} \). The output \( y_t \) is given by the convolution of the system’s impulse response \( \{G_t\}_{t \geq 0} \) with the input:
\begin{equation}
   \label{eq:finalsystem_mb}
   y_t = \sum_{i=1}^t G_{t-i} u_i, \quad \forall t \geq 1.
\end{equation}
For motivation behind such a formulation, refer to \Cref{subsection:motivation}.

Assume the input signal is zero-mean, i.e., \( \mathbb{E}[u_i] = 0 \) for all \( i \geq 0 \). Then, by linearity, \( \mathbb{E}[y_t] = 0 \) for all \( t \geq 0 \). Consequently, the variance of \( y_t \) is given by
\begin{equation}
   \label{eq:variance}
   \sigma_t^2 := \text{Var}(y_t) = \mathbb{E} \left[ \left( \sum_{i=1}^t G_{t-i} u_i \right)^2 \right].
\end{equation}


\begin{center}
   \fbox{%
     \begin{minipage}{0.95\linewidth}
      Objective: To determine conditions under which  
   \begin{equation}
   \frac{y_t}{\sigma_t} \xrightarrow{d} \mathcal{N}(0,1),
   \end{equation}  
   and, when convergence in distribution holds, to characterize the rate at which this convergence occurs.
     \end{minipage}%
   }%
\end{center}
\vspace{1mm}

To quantify the rate of convergence to the normal distribution, we use the {Wasserstein-1 distance}. For random variables \(X\) and \(W\), the Wasserstein-1 distance \( d_{\mathcal{W}}(X, W) \) is defined as
\begin{equation}
   \label{eq:W_defn}
   d_{\mathcal{W}}(X, W) := \sup_{h \in \text{Lip}_1} \mathbb{E}[h(X) - h(W)],
\end{equation}
where \( \text{Lip}_1 \) denotes the set of all 1-Lipschitz functions. Notably, convergence in the Wasserstein-1 metric implies convergence in distribution. 

Stein~\cite{stein1972bound} established that the Wasserstein-1 distance between a random variable \( X \) and the standard normal variable \( Z \sim \mathcal{N}(0,1) \) can be bounded as follows:
\begin{equation}
   \label{eq:W_stdN}
   d_{\mathcal{W}}(X, Z) \leq \sup_{f \in \mathcal{F}} \mathbb{E}[f'(X) - Xf(X)],
\end{equation}
where \( \mathcal{F} \) denotes a class of functions with uniformly bounded first and second derivatives.

Let \( \{u_i\} \) be a sequence of random variables with a local dependence structure, meaning that each \( u_i \) is dependent on at most \( D \) other variables \( u_j \). Such a dependence formulation arises naturally if \eqref{eq:finalsystem_mb} is generated by an ARMA process (see \Cref{subsection:motivation} for more details.) The parameter \( D \) is referred to as the maximum dependency degree of the sequence \( \{u_i\} \). Then the following proposition holds:

\begin{proposition}
\label{Prop1}
Consider the system defined in \eqref{eq:finalsystem_mb}. Suppose the input sequence \( \{u_i\} \) satisfies the following conditions:
\begin{enumerate}
    \item Each \( u_i \) has maximum dependency degree \( D \).
    \item \( \mathbb{E}[|u_i|^3] \leq s_3 \) and \( \mathbb{E}[|u_i|^4] \leq s_4 \), for some finite positive constants \( s_3, s_4 > 0 \).
\end{enumerate}
Then, for all \( t \geq 1 \), the Wasserstein-1 distance between the normalized output \( y_t / \sigma_t \) and the standard normal random variable \( Z \sim \mathcal{N}(0,1) \) satisfies:
\begin{align}
   d_{\mathcal{W}}\left( \frac{y_t}{\sigma_t}, Z \right) 
   &\leq \frac{D^2}{\sigma_t^3} \sum_{i=1}^t \mathbb{E}[|G_{t-i} u_i|^3] \nonumber \\
   &\quad + \frac{2D^{3/2}}{\sqrt{\pi} \sigma_t^2} \sqrt{ \sum_{i=1}^t \mathbb{E}[|G_{t-i} u_i|^4] },
   \label{eq:dwbound}
\end{align}
where \( \sigma_t^2 = \operatorname{Var}(y_t) \) is defined in \eqref{eq:variance}.
\end{proposition}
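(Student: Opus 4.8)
The plan is to apply the Stein bound \eqref{eq:W_stdN} to the normalized output and exploit the local dependence structure of $\{u_i\}$. First I would write $X := y_t/\sigma_t = \sum_{i=1}^t \xi_i$ with $\xi_i := G_{t-i}u_i/\sigma_t$, so that $\mathbb{E}[X]=0$ and $\mathrm{Var}(X)=1$. For each $i$ let $A_i$ denote a dependency neighborhood of size at most $D$ (the indices $j$ on which $\xi_i$ depends, including $i$ itself), so that $\xi_i$ is independent of $\{\xi_j : j\notin A_i\}$, and set $\eta_i := \sum_{j\in A_i}\xi_j$. Because $\mathbb{E}[\xi_i(X-\eta_i)]=0$ by independence, summing over $i$ produces the key normalization identity $\sum_i \mathbb{E}[\xi_i\eta_i] = \mathrm{Var}(X) = 1$, which is what lets the leading Stein term cancel.

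Next I would expand the Stein functional. Writing $\mathbb{E}[Xf(X)] = \sum_i \mathbb{E}[\xi_i f(X)]$ and using $\mathbb{E}[\xi_i f(X-\eta_i)]=0$, I replace $f(X)$ by $f(X)-f(X-\eta_i)$ and Taylor expand to second order, $f(X)-f(X-\eta_i)=\eta_i f'(X)-\tfrac12\eta_i^2 f''(\zeta_i)$. This yields
\begin{align*}
\mathbb{E}[f'(X)-Xf(X)] = \mathbb{E}\!\Big[f'(X)\big(1-\textstyle\sum_i \xi_i\eta_i\big)\Big] + \tfrac12\sum_i \mathbb{E}[\xi_i\eta_i^2 f''(\zeta_i)].
\end{align*}
Using the Stein-factor bounds $\|f'\|_\infty\le\sqrt{2/\pi}$ and $\|f''\|_\infty\le 2$ valid for the solution of the Stein equation with $1$-Lipschitz data, together with Jensen's inequality and the identity $\mathbb{E}[\sum_i\xi_i\eta_i]=1$, the first term is at most $\sqrt{2/\pi}\,\sqrt{\mathrm{Var}(\sum_i\xi_i\eta_i)}$ and the second is at most $\sum_i\mathbb{E}|\xi_i\eta_i^2|$.

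It then remains to bound the two moment quantities and track the powers of $D$. For the remainder term I would expand $\eta_i^2=\sum_{j,k\in A_i}\xi_j\xi_k$ (at most $D^2$ summands), apply the AM-GM inequality $|\xi_i\xi_j\xi_k|\le\tfrac13(|\xi_i|^3+|\xi_j|^3+|\xi_k|^3)$, and note that each index is charged $O(D^2)$ times, giving $\sum_i\mathbb{E}|\xi_i\eta_i^2|\le D^2\sum_i\mathbb{E}|\xi_i|^3$; substituting $\xi_i=G_{t-i}u_i/\sigma_t$ recovers the first term of \eqref{eq:dwbound}. For the variance term I would write $\mathrm{Var}(\sum_i\xi_i\eta_i)=\sum_{i,k}\mathrm{Cov}(\xi_i\eta_i,\xi_k\eta_k)$ and observe that a summand vanishes unless the neighborhoods of $i$ and $k$ are joined by a dependency edge; a counting argument shows only $O(tD^3)$ of the underlying four-index terms survive, and bounding each surviving covariance by fourth moments via Hölder/AM-GM yields $\mathrm{Var}(\sum_i\xi_i\eta_i)\le 2 D^3\sum_i\mathbb{E}[\xi_i^4]$, whose square root supplies the $\tfrac{2D^{3/2}}{\sqrt\pi\,\sigma_t^2}\sqrt{\sum_i\mathbb{E}|G_{t-i}u_i|^4}$ term after unnormalizing.

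The main obstacle is this last combinatorial bookkeeping: correctly identifying which pairs of dependency neighborhoods interact, and verifying that each fourth moment is charged only $O(D^3)$ times so that the variance bound carries exactly the power $D^{3/2}$ once the square root is taken. Everything else is a routine application of the Stein machinery together with elementary moment inequalities.
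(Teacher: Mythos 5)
Your proposal is correct and takes essentially the same route as the paper: the paper proves Proposition~\ref{Prop1} simply by citing the standard local-dependence Wasserstein bound from Stein's method (Zhang 2016, Sec.~3.1.1; see also Ross 2011, Thm.~3.6), and your argument is precisely the textbook proof of that cited result, with your variance estimate $\mathrm{Var}\big(\sum_i \xi_i\eta_i\big)\le 2D^3\sum_i \mathbb{E}[\xi_i^4]$ combining with $\|f'\|_\infty\le\sqrt{2/\pi}$ to reproduce the exact constant $2D^{3/2}/\sqrt{\pi}$ the paper imports from Zhang. The only difference is that you supply the derivation the paper outsources to the reference.
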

\begin{proof}
   See  \cite[Section 3.1.1]{Zhang2016FUNDAMENTALSOS}.
\end{proof}
The proof of this proposition can also be found in \cite{ross2011fundamentals}. However, we borrow the result from~\cite{Zhang2016FUNDAMENTALSOS}, as it provides tighter constants.

Our goal is to understand the conditions under which the right-hand side of (\ref{eq:dwbound}) goes to zero when $t\rightarrow 0$ and, under these conditions, characterize the rate of convergence to zero.

\section{MAIN RESULTS}
\label{section:mainresults}

Before we state our main results, we provide some definitions. Let \( \mathcal{N}_{u_i} \) denote the set of all \( \{u_j\} \) such that \( u_j \) is correlated with \( u_i \). The sequence \( \{u_i\} \) is said to be \emph{positively correlated} if $\forall\ u_i,$
\begin{equation}
   \label{eq:pos_cor}
   \mathbb{E}[u_i u_j] > 0 \quad  \ \forall\, u_j \in \mathcal{N}_{u_i}.
\end{equation}

The sequence $\{u_i\}$ is said to satisfy a correlation decay property if
\begin{equation}
   |\E{u_i u_j}| \leq a \underbar{s}_2 \ \ \forall i\neq j
\end{equation}
where $0<a<1$ is some constant and $\underbar{s}_2:= \min_i \E{u_i^2}$. 
\begin{theorem}
   \label{thm:main}
   Consider the LTI system described in \eqref{eq:finalsystem_mb} with output \( y_t \) and variance \( \sigma_t^2 = \operatorname{Var}(y_t) \). Let \( Z \sim \mathcal{N}(0,1) \) denote the standard normal distribution. 
   Suppose any of the following is true: 
      \begin{itemize}
         \item[(i)] \( \{u_i\} \) is an independent sequence;
         \item[(ii)] \( \{u_i\} \) is positively correlated, and the system \( G \) has a dominant pole that is real and positive;
         \item[(iii)] \( \{u_i\} \) exhibits correlation decay for $a$ sufficiently small, where the upper bound on $a$ depends on the parameters of the LTI system (see \Cref{subsection:GCD}).
     \end{itemize}
      Then,
      \begin{equation}
         d_{\mathcal{W}}\left(\frac{y_t}{\sigma_t},Z\right) \leq f(\alpha,t)
      \end{equation}
      where \( e^{-\alpha} := |r| \), with \( r \) denoting the dominant pole of the system response \( G \). The exact form of the function \( f(\alpha, t) \) for each case above is provided in the corresponding proofs.

      Furthermore, the asymptotic rate of convergence as the dominant pole approaches the edge of stability satisfies:
      \begin{equation}
         \lim_{\alpha \to 0} d_{\mathcal{W}}\left( \frac{y_t}{\sigma_t}, Z \right) \leq O\left( \frac{1}{\sqrt{t}} \right).
      \end{equation}
   \end{theorem}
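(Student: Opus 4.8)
The plan is to start from the nonasymptotic bound of \Cref{Prop1} and to control its two contributions separately, reducing each to a power sum of the impulse response that is governed by the dominant pole $r$. Because $u_i$ and the deterministic gain $G_{t-i}$ factor apart, the moment hypotheses of \Cref{Prop1} bound the numerators immediately,
\[
\sum_{i=1}^t \E{|G_{t-i}u_i|^3}\le s_3\sum_{k=0}^{t-1}|G_k|^3,\qquad \sum_{i=1}^t \E{|G_{t-i}u_i|^4}\le s_4\sum_{k=0}^{t-1}|G_k|^4 .
\]
Hence the entire case distinction (i)--(iii) enters only through a \emph{lower} bound on the variance $\sigma_t^2=\operatorname{Var}(y_t)$ of \eqref{eq:variance}, and in each case I aim to show $\sigma_t^2\ge C\,\underline{s}_2\sum_{k=0}^{t-1}|G_k|^2$ for some fixed constant $C>0$.

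Establishing this variance lower bound is the crux. Writing $\sigma_t^2=\sum_{i,j}G_{t-i}G_{t-j}\,\E{u_i u_j}$ and isolating the diagonal $\sum_i G_{t-i}^2\,\E{u_i^2}\ge \underline{s}_2\sum_{k=0}^{t-1}|G_k|^2$, I handle the off-diagonal cross terms case by case. Under (i) they vanish and the bound holds with $C=1$. Under (ii), a real positive dominant pole forces $G_k\ge 0$, and positive correlation makes each surviving cross term $G_{t-i}G_{t-j}\,\E{u_iu_j}\ge 0$, so discarding them preserves $C=1$. Under (iii), the maximum dependency degree $D$ caps the number of nonzero cross terms attached to each index while the decay hypothesis bounds each by $a\,\underline{s}_2$; combining $2|G_{t-i}G_{t-j}|\le|G_{t-i}|^2+|G_{t-j}|^2$ with a sum over the at most $D$ correlated neighbours of each $u_i$ yields $\sigma_t^2\ge \underline{s}_2(1-aD)\sum_{k=0}^{t-1}|G_k|^2$. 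This is exactly where the smallness of $a$ is needed, with the crude threshold $a<1/D$ sharpened to the system-dependent bound of \Cref{subsection:GCD}.

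It then remains to pass to the dominant pole and let $\alpha\to 0$. Since $G$ is stable with $|r|=e^{-\alpha}$, its power sums are controlled by a single geometric mode, $\sum_{k=0}^{t-1}|G_k|^p\asymp \Sigma_p(\alpha,t):=\sum_{k=0}^{t-1}e^{-p\alpha k}=\tfrac{1-e^{-p\alpha t}}{1-e^{-p\alpha}}$, the subdominant poles contributing only lower-order terms as $|r|\to 1$. Substituting the numerator and denominator estimates into \eqref{eq:dwbound} writes the bound $f(\alpha,t)$ as a fixed multiple of $\Sigma_3/\Sigma_2^{3/2}$ plus a fixed multiple of $\sqrt{\Sigma_4}/\Sigma_2$. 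Because $\Sigma_p(\alpha,t)\to t$ as $\alpha\to 0$ for every $p$, both ratios tend to $t^{-1/2}$, and therefore $\lim_{\alpha\to 0}f(\alpha,t)=O(1/\sqrt{t})$, as claimed.

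The main obstacle is precisely the variance lower bound in the correlated regimes (ii) and (iii): unlike the independent case, destructive interference among the cross-correlation terms could in principle collapse $\sigma_t^2$, so the argument must exploit either the sign structure forced by a positive real pole or the quantitative decay-plus-bounded-degree control. The secondary technical point is ensuring that replacing the true power sums $\sum_{k}|G_k|^p$ by the clean geometric sums $\Sigma_p$ is legitimate uniformly enough that the subdominant modes do not corrupt the leading $t^{-1/2}$ scaling as the pole approaches the unit circle.
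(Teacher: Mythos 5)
Your skeleton is the same as the paper's: invoke \Cref{Prop1}, note that the moment hypotheses reduce the numerators to $s_3\sum_k|G_k|^3$ and $s_4\sum_k|G_k|^4$, push all case-dependence into a lower bound $\sigma_t^2\gtrsim \sum_k|G_k|^2$, and then let the dominant pole approach the unit circle. One sub-argument is genuinely different and, in my view, cleaner: for case (iii) you control the cross terms via $2|G_{t-i}G_{t-j}|\le |G_{t-i}|^2+|G_{t-j}|^2$ and the dependency degree $D$, obtaining the parameter-free threshold $a<1/D$, whereas the paper compares the cross terms to the diagonal through the explicit envelope $|G_i|\asymp i^{d_1-1}e^{-\alpha i}$ and ends up with a threshold depending on $\cu_{d_1},\co_{d_1},\eps,M$. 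Your route buys a simpler condition; the paper's is tied to constants it has already set up for the asymptotics.

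Two of your claims are false as stated, though both are repairable. First, in case (ii) a real positive dominant pole does not force $G_k\ge 0$ for all $k$: since $G_k=c_1(k)r_1^k\bigl(1+\sum_{j\ge 2}(c_j(k)/c_1(k))(r_j/r_1)^k\bigr)$, the sign can be wrong for small $k$ where the subdominant modes have not yet died out. The paper only establishes $G_iG_j=|G_i||G_j|>0$ for $i,j$ beyond a finite threshold $T_{\eps'}$ and absorbs the initial segment into a constant; you need the same device, because without it ``discarding the nonnegative cross terms'' is not a valid lower bound for the finitely many early terms. Second, the reduction $\sum_k|G_k|^p\asymp\sum_k e^{-p\alpha k}$ holds only when the dominant pole is simple. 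If its multiplicity is $d_1>1$, then $|G_k|\asymp k^{d_1-1}e^{-\alpha k}$ and the power sums grow like $t^{p(d_1-1)+1}$ as $\alpha\to 0$, which is not comparable to $t$. The final ratios still come out to $O(1/\sqrt{t})$ --- with $d=d_1-1$ one gets $t^{3d+1}/t^{3(2d+1)/2}=t^{-1/2}$ and $t^{(4d+1)/2}/t^{2d+1}=t^{-1/2}$ --- which is exactly why the paper splits into the cases $d=0$ and $d\ge 1$; your write-up silently assumes the former, and also leaves implicit the conjugate-pair bookkeeping needed when the dominant pole is complex.
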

   \vspace{3mm}
   \begin{proof}
      The proofs corresponding to (i), (ii), and (iii) are presented in \Cref{subs:independent}, \Cref{subs:pos_cor}, and \Cref{subsection:GCD}, respectively.
   \end{proof}

Theorem 1 identifies sufficient conditions under which the output of a stable LTI system is asymptotically Gaussian. While do we do not have a complete characterization in terms of matching necessary conditions, in the appendix we provide two counterexamples which illustrate that the Gaussian limit may not hold in general when the above conditions are violated.
   
\section{PROOFS}
\label{sec:Proofs}

\subsection{\( \{u_i\} \) is an independent sequence}
\label{subs:independent}
Recall the definition of \( \sigma_t \) in \eqref{eq:variance}. Then, \( \sigma_t^2 \) is given by:
\[
\!\sigma_t^2 \!= \!\mathbb{E} \Big[ \Big( \sum_{i=1}^t G_{t-i} u_i \Big)^2 \Big] 
\stepa{=} \mathbb{E} \Big[ \sum_{i=1}^t G_{t-i}^2 u_i^2 \Big] 
\!=\! \sum_{i=1}^t G_{t-i}^2\, \mathbb{E}[u_i^2],
\]
where step (a) uses \( \mathbb{E}[u_i] = 0 \) and \( \mathbb{E}[u_i u_j] = \mathbb{E}[u_i] \mathbb{E}[u_j] = 0 \) for \( i \neq j \).

Assuming \( \mathbb{E}[u_i^2] \geq s_2 \) for all \( i \), it follows that
\[
\sigma_t^2 \geq s_2 \sum_{i=1}^t G_{t-i}^2.
\]

Since \( \{u_i\} \) is an independent sequence, the dependence degree is \( D = 1 \) (each \( u_i \) depends only on itself). Then from \Cref{Prop1}, the Wasserstein-1 distance between \( y_t/\sigma_t \) and the standard normal \( Z \sim \mathcal{N}(0,1) \) satisfies
\[
d_{\mathcal{W}}\left( \frac{y_t}{\sigma_t}, Z \right) \leq A + B,
\]
where
\[
A := \frac{\sum_{i=1}^t |G_{t-i}|^3 s_3}{s_2^{3/2} \left( \sum_{i=1}^t G_{t-i}^2 \right)^{3/2}}, \quad
B := \frac{2 \sqrt{ s_4 \sum_{i=1}^t |G_{t-i}|^4 }}{\sqrt{\pi} \cdot s_2 \left( \sum_{i=1}^t G_{t-i}^2 \right)}.
\]
Here, \( s_3 \) and \( s_4 \) are as defined in \Cref{Prop1}.

Assuming that the system in \eqref{eq:finalsystem_mb} arises from an underlying ARMA process, the following bounds hold (see \Cref{subsection:System_Response} for details). There exist constants \( \cu_{d_1}, \co_{d_1}, \epsilon, \alpha > 0 \) and \( d \in \mathbb{W} \) such that for all \( i \geq T_\epsilon \), where \( T_\epsilon \in \mathbb{N} \) is a model-dependent constant, we have:
\begin{align}
   \label{eq:bounds_on_G}
   |G_i| &\leq \co_{d_1} \cdot e^{\epsilon} \cdot e^{d \log(i) - \alpha i}, \\
   |G_i| &\geq \cu_{d_1} \cdot e^{-\epsilon} \cdot e^{d \log(i) - \alpha i}.
\end{align}
Here, \( e^{-\alpha} := |r| \), where \( r \) is the dominant pole of the system response \( G \), as in \Cref{thm:main}.

This allows us to bound the term $A$ as
\begin{align}
    A \!&= \frac{s_3 \pbr{\sum_{i=1}^{T_{\eps}-1} |G_i|^3 + \sum_{i=T_{\eps}}^t |G_i|^3} }{ s_2^{3/2} \pbr{\sum_{i=1}^{T_{\eps}-1}|G_i|^2 + \sum_{i=T_{\eps}}^t |G_i|^2}^{3/2}} \nonumber \\
    &= \frac{s_3 \pbr{ \co_3(T_\eps) + \sum_{i=T_{\eps}}^t |G_i|^3 }}{ s_2^{3/2} \pbr{\cu_2(T_\eps) + \sum_{i=T_\eps}^{t}|G_i|^2}^{3/2}} \nonumber \\
    & \leq \frac{ s_3 \pbr{  \co_3(T_\eps) \!+ \co_{d_1}^3  e^{3\eps}  \sum_{i=T_\eps}^{t} e^{3d\log i  - 3\alpha i}  }}{ s_2^{3/2} \pbr{ \cu_2(T_\eps) \!+ \cu_{d_1}^2  e^{-2\eps}  \sum_{i=T_\eps}^{t} e^{ 2d\log i - 2\alpha i} }}. \label{eq: bound-on-A}
\end{align}
Here, we consider two subcases.

\noindent \textit{Case A.1: $d \geq 1$.} Setting $\alpha = 0$ (which is equivalent to being at the edge of stability) in~\eqref{eq: bound-on-A} yields,
\begin{align*}
    A \leq \frac{ s_3 \pbr{  \co_3(T_\eps) + \co_{d_1}^3  e^{3\eps}  \sum_{i=T_\eps}^{t} i^{3 d}  }}{ s_2^{3/2} \pbr{ \cu_2(T_\eps) + \cu_{d_1}^2  e^{-2\eps}  \sum_{i=T_\eps}^{t} i^{2 d} }}.  
\end{align*}
It follows that there exist constants $c', c''$ such that 
\begin{align}
    A &\leq \frac{c' \sum_{i=T_\eps}^t i^{3d}}{c'' \pbr{\sum_{i=T_\eps}^t i^{2d}}^{3/2}} \leq \frac{c' t^{d} \cdot \sum_{i=T_\eps}^t i^{2d}}{ c'' \pbr{\sum_{i=T_\eps}^t i^{2d}}^{3/2}} \nonumber \\
    & = \frac{c'}{c''} \cdot \sqrt{ \frac{t^{2d}}{ \sum_{i=T_{\eps}}^t i^{2d}}} \leq \frac{c'}{c''} \cdot \sqrt{\frac{t^{2d}}{ \sum_{i=t/2}^t i^{2d}}} \nonumber \\
    & \leq \frac{c'}{c''} \cdot \sqrt{ \frac{t^{2d}}{ (t/2) \cdot (t/2)^{2d}}} = O(1/\sqrt{t}). \label{eq: A-is-O(1/sqrt(t))}
\end{align}

\noindent \textit{Case A.2: $d=0$.} Here we have
\begin{align*}
    A &\leq \frac{s_3 \pbr{ \co_3(T_\eps) + e^{3\eps} \co_{d_1}^3 \sum_{i=T_\eps}^t e^{-3\alpha i}}}{s_2^{3/2} \pbr{\cu_2(T_\eps) + e^{-2\eps} \cu_{d_1}^2 \sum_{i=T_\eps}^t e^{-2\alpha i}}^{3/2}}  \\
    & \leq \frac{s_3 \pbr{\co_3(T_\eps) + e^{3\eps} \co_{d_1}^3 \sum_{i=T_\eps}^t (1 - 3\alpha i + 9\alpha^2i^2/2)} }{ s_2^{3/2} \pbr{\cu_2(T_\eps) + e^{-2\eps} \cu_{d_1}^2 \sum_{i=T_\eps}^t (1-2\alpha i)}^{3/2}},
\end{align*}
where the last step uses that $1-x \leq e^{-x} \leq 1- x + x^2/2$ for all $x$. It follows that there exist positive constants $p'$ and $p''$ such that
\begin{align}
    A \leq \frac{p' \cdot \sum_{i=T_\eps}^t (1-3\alpha i + 9\alpha^2i^2/2)}{p'' \pbr{\sum_{i=T_\eps}^t (1-2\alpha i)}^{3/2}}. \label{eq: alpha-dependent-bound-on-A}
\end{align}
Setting $\alpha = 0$ in~\eqref{eq: alpha-dependent-bound-on-A} yields that
\begin{equation}
   \label{eq:A}
   A \leq \frac{p' \cdot (t-T_\eps)}{p'' \cdot (t-T_\eps)^{3/2}} = O(1/\sqrt{t}).
\end{equation}

Next, the term $B$ is analyzed. Notice that
\begin{align}
   \label{eq:initialB}
    B \leq \frac{ 2 \sqrt{ s_4 \pbr{\co_4(T_\eps) + \sum_{i=T_\eps}^t \co_{d_1}^4 e^{4\eps} e^{4(d\log i - \alpha i)}}} }{\sqrt{\pi} s_2 \pbr{ \cu_2(T_\eps) + \sum_{i=T_\eps}^t \cu_{d_1}^2 e^{-2\eps} e^{2(d\log i - \alpha i)} }}.
\end{align}

\noindent \textit{Case B.1: $d\geq 1$.} Setting $\alpha = 0$ yields
\begin{align*}
    B \leq \frac{2 \cdot \sqrt{s_4 \pbr{\co_3(T_\eps) + \co_{d_1}^4 e^{4\eps} \sum_{i=T_{\eps}}^t i^{4d}}}}{s_2^{3/2} \pbr{\cu_2(T_\eps) + \cu_{d_1}^2 e^{-2\eps}\sum_{i=T_\eps}^t i^{2d}}}.
\end{align*}
It follows that there exist constants $c_3'$ and $c_3''$ such that 
\begin{align}
    B \!\leq \frac{c_3' \sqrt{\sum_{i=T_\eps}^t i^{4d}}}{c_3'' \sum_{i=T_{\eps}}^t i^{2d}} \! \leq \frac{c_3'}{c_3''} \sqrt{\frac{t^{2d}}{\sum_{i=T_\eps}^t i^{2d}}} \stepa{=} O(1/\sqrt{t}), \label{eq: B-is-O(1/sqrt(t))}
\end{align}
where (a) uses the same steps as in~\eqref{eq: A-is-O(1/sqrt(t))}.

\noindent \textit{Case B.2: $d=0$.} Here we have
\begin{align*}
    B &\leq \frac{2 \sqrt{ s_4 \pbr{\co_4(T_\eps) + \sum_{i=T_\eps}^t \co_{d_1}^4 e^{4\eps} e^{-4\alpha i}} }}{\sqrt{\pi} s_2 \pbr{\cu_2(T_\eps) + \sum_{i=T_\eps}^t \cu_{d_1}^2 e^{-2\eps} e^{-2\alpha i}}} \\
    & \leq \frac{2 \sqrt{ s_4 \pbr{\co_4(T_\eps) +  \co_{d_1}^4 e^{4\eps}\sum_{i=T_\eps}^t (1-4\alpha i + 8\alpha^2 i ^2) } }}{\sqrt{\pi} s_2 \pbr{\cu_2(T_\eps) +  \cu_{d_1}^2 e^{-2\eps} \sum_{i=T_\eps}^t (1-2\alpha i)}},
\end{align*}
where the last step uses that $1-x\leq e^{-x} \leq 1-x+x^2/2$ for all $x$. It follows that there exist positive constants $p_1'$ and $p_1''$ such that
\begin{align}
    B \leq \frac{p_1' \sqrt{\sum_{i=T_\eps}^t (1-4\alpha i + 8 \alpha^2 i^2)}}{p_1''  \cdot \sum_{i=T_\eps}^t (1-2\alpha i)} \label{eq: alpha-dependent-bound-on-B}.
\end{align}
Setting $\alpha=0$ in~\eqref{eq: alpha-dependent-bound-on-B} yields that
\begin{equation}
   \label{eq:B}
B \leq \frac{p_1' \cdot \sqrt{t}}{p_1''\cdot  t} = O(1/\sqrt{t}).
\end{equation}

Note that the function \( f(\alpha, t) \) in \Cref{thm:main}, corresponding to the case where \( \{u_i\} \) is an independent sequence, is obtained by summing the right-hand sides of \eqref{eq: bound-on-A} and \eqref{eq:initialB}.

The asymptotic rate of convergence as the dominant pole approaches the edge of stability follows from \eqref{eq: A-is-O(1/sqrt(t))}, \eqref{eq:A}, \eqref{eq: B-is-O(1/sqrt(t))} and \eqref{eq:B}. Specifically, we obtain
\[
\lim_{\alpha\to 0}d_{\mathcal{W}}\left( \frac{y_t}{\sigma_t}, Z \right) \leq O\left(\frac{1}{\sqrt{t}}\right).
\]

\subsection{\( \{u_i\} \) is positively correlated}
\label{subs:pos_cor}
Given a $u_i$, let $\mathcal{N}_{u_i}$ represent the set of inputs that are correlated with $u_i$. For more details on the structure of  $\mathcal{N}_{u_i}$, refer \Cref{subsection:motivation}. 
The variance of $y_t$, denoted by $\sigma_t^2$ in \eqref{eq:variance} is given by:
\begin{align}
   \sigma_t^2 & = \E{\sum_{i=1}^t\sum_{j=1}^t G_{t-i}G_{t-j}u_i u_j} \\ & = \E{\sum_{i=1}^t\sum_{j\in\mathcal{N}_{u_i}} G_{t-i}G_{t-j}u_i u_j} 
\end{align}
Upon rearranging the indices, 
\begin{align}
   \sigma_t^2 & =  \E{\sum_{i=1}^t\sum_{t-j\in\mathcal{N}_{u_{t-i}}} u_{t-i}u_{t-j}G_i G_j} \\
   & = \E{\sum_{i=1}^{T'_\eps-1}\sum_{t-j\in\mathcal{N}_{u_{t-i}}}\!\!\!\!\!\!u_{t-i}u_{t-j}G_i G_j} \!\!\! \\ & ~~~~~~~ + \E{\sum_{i=T'_\eps}^t\sum_{t-j\in\mathcal{N}_{u_{t-i}}}\!\!\!\!\!\!u_{t-i}u_{t-j}G_i G_j} 
\end{align}
Let $\scu \leq \E{u_i u_j} \leq \sco \ \forall i, j\in\mathcal{N}_{u_i}$. Since $\{u_i\}$ are positively correlated, it follows from \eqref{eq:pos_cor} that $\scu>0$ when $u_j$ is correlated to $u_i$. There exists a finite \( T'_\epsilon \in \mathbb{N} \) such that for all \( i, j > T'_\epsilon \), we have \( G_i G_j = |G_i| |G_j|.\) Further details can be found in \Cref{subsubsection:pos_cor}. We then obtain,
\begin{align}
   \sigma_t^2 & \geq c(T'_\eps) + \scu\left[\sum_{i=T'_\eps}^t\sum_{t-j\in\mathcal{N}_{u_{t-i}}}|G_i||G_j|\right]
\end{align}
From \eqref{eq:bounds_on_G},
\begin{align}
   \sigma_t^2 & \geq c(T'_\eps) + \scu e^{-2\eps}\cu_d^2 \left[\sum_{i=T'_\eps}^t\sum_{t-j\in\mathcal{N}_{u_{t-i}}} i^d j^d e^{-\alpha i}e^{-\alpha j}\right]
\end{align}
Let \( M \in \mathbb{N} \) be such that either \( i - M \) is the smallest index for which \( u_{i-M} \in \mathcal{N}_{u_i} \), or \( i + M \) is the largest index for which \( u_{i+M} \in \mathcal{N}_{u_i} \). Then,
\begin{align}
   \sigma_t^2 & \geq c(T'_\eps) + \scu e^{-2\eps}\cu_d^2 \left[\sum_{i=T'_\eps}^t i^d (i-M)^d e^{-\alpha i}e^{-\alpha (i+M)}\right] \\
   & = c(T'_\eps) + \scu e^{-2\eps-M\alpha}\cu_d^2 \left[\sum_{i=T'_\eps}^t i^d (i-M)^d e^{-2\alpha i}\right] \label{eq:boundonsigma}
\end{align}
From \Cref{Prop1}, the Wasserstein-1 distance between \( y_t/\sigma_t \) and the standard normal \( Z \sim \mathcal{N}(0,1) \) satisfies
\[
d_{\mathcal{W}}\left( \frac{y_t}{\sigma_t}, Z \right) \leq P + Q,
\]
where
\[
P := \frac{D^2}{\sigma_t^3} s_3 \sum_{i=1}^t |G_{t-i}|^3 , \quad
Q := \frac{2 D^{3/2}}{{\sqrt{\pi} \sigma_t^2}}\sqrt{ s_4 \sum_{i=1}^t |G_{t-i}|^4 }.
\]
Here, \( s_3 \) and \( s_4 \) are as defined in \Cref{Prop1}.

The term \( P \) can initially be bounded in a manner similar to the numerator of \eqref{eq: bound-on-A}, as shown below:
\begin{align}
   P \leq \frac{D^2}{\sigma_t^3} s_3 \pbr{  \co_3(T'_\eps) \!+ \co_{d_1}^3  e^{3\eps}  \sum_{i=T'_\eps}^{t} e^{3d\log i  - 3\alpha i}  }
\end{align}
From \eqref{eq:boundonsigma}, 
\begin{equation}
   \label{eq: bound_on_P}
   P\leq \frac{D^2 s_3 \pbr{  \co_3(T'_\eps) \!+ \co_{d_1}^3  e^{3\eps}  \sum_{i=T'_\eps}^{t} e^{3d\log i  - 3\alpha i}  } }{\pbr{c(T'_\eps) + \scu e^{-2\eps-M\alpha}\cu_d^2 \left[\sum_{i=T'_\eps}^t i^d (i-M)^d e^{-2\alpha i}\right]}^{3/2}}
\end{equation}
There exists some constants $q',q''$ such that
\begin{equation}
   P \leq \frac{q'\sum_{i=T'_\eps}^t e^{-3\alpha i}i^{3d}}{q''\pbr{\sum_{i=T'_\eps}^t e^{-2\alpha i}i^{2d}}^{3/2}}
\end{equation}
Further analysis of bounding of $P$ is similar to \textit{Case A.1} and \textit{Case A.2} in \Cref{subs:independent}. 

The term \( Q \) can initially be bounded in a manner similar to the numerator of \eqref{eq:initialB}, as shown below:
\begin{equation}
   Q \leq \frac{2 D^{3/2}}{{\sqrt{\pi} \sigma_t^2}} \sqrt{ s_4 \pbr{\co_4(T'_\eps) + \sum_{i=T'_\eps}^t \co_{d_1}^4 e^{4\eps} e^{4(d\log i - \alpha i)}}}
\end{equation}
From \eqref{eq:boundonsigma},  
\begin{equation}
   \label{eq: bound_on_Q}
   Q \leq \frac{2 D^{3/2}  \sqrt{ s_4 \pbr{\co_4(T'_\eps) + \sum_{i=T'_\eps}^t \co_{d_1}^4 e^{4\eps} e^{4(d\log i - \alpha i)}}}}{\sqrt{\pi}c(T'_\eps) + \scu e^{-2\eps-M\alpha}\cu_d^2 \left[\sum_{i=T'_\eps}^t i^d (i-M)^d e^{-2\alpha i}\right] }
\end{equation}
There exists some constants $l',l''$ such that
\begin{equation}
   Q\leq \frac{l'\sqrt{\sum_{i=T'_\eps}^t e^{-4\alpha i}i^{4d}}}{l''\pbr{\sum_{i=T'_\eps}^t e^{-2\alpha i}i^{2d}}}
\end{equation}
Further analysis of bounding of $Q$ is similar to \textit{Case B.1} and \textit{Case B.2} in \Cref{subs:independent}. 

Note that the function \( f(\alpha, t) \) in \Cref{thm:main}, corresponding to the case where \( \{u_i\} \) is positively correlated, is obtained by summing the right-hand sides of \eqref{eq: bound_on_P} and \eqref{eq: bound_on_Q}.

Hence, the asymptotic rate of convergence as the real, positive dominant pole approaches the edge of stability is,
\[
   \lim_{\alpha\to 0} d_{\mathcal{W}}\left( \frac{y_t}{\sigma_t}, Z \right) \leq O\pbr{\frac{1}{\sqrt{t}}}.
\]

\section{Counterexamples}

\subsubsection{When $u_i$ are postiviely correlated and the dominant pole is in the left half plane}
Consider the following example,
\[ 
y_t = - y_{t-1} + w_t +w_{t-1},
\]
where $w_i$ are independent. Equivalently, 
\[ 
y_n = w_n + (-1)^{n+1}w_0.
\]
Such a system does not converge to a standard normal. Let $u_t = w_t + w_{t-1}$, i.e. $u_t$ has a one-step positive correlation and the pole is at $-1$. In general, for a sequence of random variables to converge to a standard normal, it is necessary for their variance to be unbounded. Consider a variant of the above system
\[ 
y_t = - \rho y_{t-1} + u_t,
\]
where $u_t = w_t + w_{t-1}$ such that $\E{w_i^2} = 1$ for all $i$, and $\E{w_i} = 0$. We have
\begin{align*}
    Y(z) = -\rho z^{-1} Y(z) + U(z),
\end{align*}
or equivalently that 
\[ 
Y(z) = \frac{z}{z+\rho} U(z),
\]
i.e. $y_t = \sum_{i=1}^t (-\rho)^{t-i} u_i$. Here the variance is given by
\begin{align*}
    \sigma_t^2 &= \!\mathbb{E}\Bigg[ \Bigg(\sum_{i=1}^t (-\rho)^{t-i} u_i \Bigg)^2 \Bigg] = \mathbb{E}\Bigg[ \Bigg(\sum_{i=1}^t (-\rho)^{i} u_{t-i} \Bigg)^2 \Bigg]
    \\
    & = \!\sum_{i=1}^t 2\rho^{2i} \!-\! \rho \sum_{i=1}^t \rho^{2i} \!-\! \frac 1\rho \sum_{i=1}^t \rho^{2i} = \pbr{2\!-\!\rho \!-\! \frac1\rho}\frac{1}{1\!-\!\rho^2},
\end{align*}
i.e. $\sigma_t \nrightarrow \infty$ as $\rho\to 1$. One approach to overcoming this would be to have correlation decay. For instance, having $u_t = w_t + a w_{t-1}$ for some $0 < a < 1$ would yield
\[ 
\sigma_t^2 = \pbr{2 - \rho a - \frac a \rho} \frac{1}{1-\rho^2},
\]
since $\E{u_t u_{t+1} } = a$ and $\E{u_t^2} = 2 $. This implies that $\sigma_t\to\infty$ as $\rho\to 1$.

\subsubsection{When the dominant pole is purely imaginary and noise is positively correlated}

Consider the following system
 \[ 
y_t = - y_{t-2} + u_t,
 \]
 where $u_t =  w_t + w_{t-2}$. Then, such a system satisfies
 \[ 
    y_k = 
    \begin{cases} 
    w_k + (-1)^{\frac k 2 + 1}, & k \mbox{ even} \\
    w_k + (-1)^{\frac{k+1}{2}}w_0, & k\mbox{ odd}
    \end{cases},
 \]
 i.e. the response $y_t$ does not converge to a standard normal. The $Z$-transform for the above system is given by $Y(z) = - z^2 Y(z) + U(z)$, or equivalently
 \[ 
Y(z) = \frac{z^2}{z^2+1} U(z),
 \]
 i.e. the poles of the system are purely imaginary.

\section{CONCLUSIONS}
A long standing conjecture that is used in the analysis of control systems is that the output of a stable linear system driven by noise asymptotically has a Gaussian distribution when the dominant pole is close to the unit circle. To the best of our knowledge there has been no proof of this fact, so in this paper we present a set of sufficient conditions under which the conjecture holds. Further, we argue that the conjecture does not hold in general by providing examples of noise sequences and linear systems for which the conjectured result does not hold. Since our conditions in this paper are only sufficient conditions, an interesting directions for future work is to understand whether these conditions can be relaxed. Another possible direction is to extend the results to the case where the driving noise is either a martingale or a Markov process by exploiting recent results \cite{srikant2024rates,rollin2018quantitative} on the application of Stein's method to these types of random processes.

\section{Acknowledgments}
Y.M. and R.S. were supported by NSF grant CCF-2207547.





\appendix
\label{appendix}
\subsection{Structural Assumptions on the Noise Sequence in \eqref{eq:finalsystem_mb}}
\label{subsection:motivation}

In analyzing \eqref{eq:finalsystem_mb}, we made the following assumption:
\( \{u_i\} \) be a sequence of random variables with a local dependence structure, meaning that each \( u_i \) is dependent on at most \( D \) other variables \( u_j \). This assumption is justified if \eqref{eq:finalsystem_mb} is derived from an ARMA model:
\begin{equation}
   \label{eq:dtsprocess}
   y_t = \sum_{i=1}^n a_i y_{t-i} + \sum_{j=0}^m b_j w_{t-j}
\end{equation}
where $\{w_t\}$ is a mean zero, independent noise sequence (not necessarily identically distributed). This model can be seen as an IIR filter, with the coefficients $\{a_i\}$ forming the recursive (feedback) part and the coefficients $\{b_j\}$ constituting the feedforward part, resulting in an impulse response that is theoretically infinite, yet convergent under appropriate stability conditions. Alternatively, it is interpretable as an ARMA$(n,m)$ process, where the autoregressive component captures the dependence of $y_t$ on its past values and the moving average component reflects the influence of current and past independent noise inputs, thereby providing a probabilistic framework for analyzing the system’s asymptotic behavior.

The process in \eqref{eq:dtsprocess} is equivalent to 
\begin{equation}
   \label{eq:dtsprocess1}
   y_t = \sum_{i=1}^n a_i y_{t-i} + u_t
\end{equation}
where,
\begin{equation}
   \label{eq:ut}
   u_t:= \sum_{j=0}^m b_j w_{t-j}
\end{equation}
Note that the sequence $\{u_i\}$ is independent when $u_i = b_0 w_i$. However, in general, from \eqref{eq:ut} it is clear that $u_t$ is correlated with $u_{t+1},\ldots,u_{t+m}$ and $u_{t-1},\ldots,u_{t-m}$. The local dependence neighborhood of $u_t$, denoted as  $\mathcal{N}_{u_t}$ is defined as the set of all $u_i$ that is correlated with $u_t$. That is,
\begin{equation}
   \mathcal{N}_{u_t} = \{u_{t-m},\ldots,u_{t-1},u_t,u_{t+1},\ldots,u_{t+m}\}
\end{equation}
The maximum dependency degree of the sequence $\{u_i\}$ is thus defined as:
\begin{equation}
   D:=\max_{t}|\mathcal{N}_{u_t}| = 2m+1
\end{equation}

\subsection{The System Response $G$}
\label{subsection:System_Response}
The z-transform of \eqref{eq:dtsprocess1} yields the following,
\begin{equation}
   \label{eq:Ztransform}
   Y[z] = \sum_{i=1}^n a_i z^{-i} Y[z] + U[z]
\end{equation}
where $Y[z]$ and $U[z]$ are the Z-transforms of $y_t$ and $u_t$ respectively. \eqref{eq:Ztransform} is equivalent to the following:
\begin{equation}
   \label{eq:Ztransform1}
   Y[z] = \frac{z^n U[z]}{z^n - a_1 z^{n-1}-a_2 z^{n-2}\cdots -a_n}
\end{equation}
Let $\rh_i$ be the roots of the polynomial in the denominator of \eqref{eq:Ztransform1}. Then,
\begin{equation}
   \label{eq:finalZtransform}
   Y[z] = \frac{z^n U[z]}{\Pi_{i=1}^p \left( z- \rh_i\right)^{d_i}}
\end{equation}
where $d_i$ represents the degree of root $\rh_i$ and $p$ represents the number of distinct roots. Taking the inverse Z-transform of the system in \eqref{eq:finalZtransform},
\begin{equation}
   \label{eq:finalsystem}
   y_t = \sum_{i=1}^t G_{t-i} u_i 
\end{equation}
where $G_j = \sum_{k=1}^p c_k(j) \rh_k^j$ and the initial conditions are set as follows: $w_0, \ldots, w_{1-m}, y_0, \ldots, y_{1-n} = 0$
\begin{remark}
   $c_k(j)$ is a polynomial in $j$ of degree $d_k-1$ associated with root $\rh_k$, whose coefficients are time varying. The coefficients of $c_k(j)$ are a function of all the roots $\rh_i$. When $\rh_k$ is real, the coefficients of this polynomial are not time varying. When $\rh_k$ is complex, the coefficients of this polynomial change based on the value of $j$ mod $4$.  That is, the time varying coefficients take values from a finite set.
\end{remark}
\begin{remark}
   For the system in \eqref{eq:finalsystem} to be stable we require $|\rh_k|<1 \quad \forall k\in\{1,\ldots, p\}$.
\end{remark}
Since $\mathbb{E}[w_i]=0$, from \eqref{eq:ut}, it is true that $\mathbb{E}[u_i] = 0 \quad \forall i$. This thus implies that $\mathbb{E}[y_t]=0 \quad \forall t.$ 

Recalling the definition $G_i = \sum_{k=1}^p c_k(i) \rh_k^i$, it follows that there are two cases to consider: the dominant root is either real or complex. If real, we have 
\begin{align}  \label{eq: ordered-roots}
1 > |\rh_1| > |\rh_2| \geq |\rh_3| \geq \cdots \geq |\rh_p|.
\end{align}
If complex, it must be that $|\rh_1| = |\rh_2|$, since complex roots occur as conjugates. It follows therefore that $d_1 = d_2$. For ease of notation, we combine the polynomials $c_1(j)$ and $c_2(j)$ and express the combined polynomial as $c_1(j) |r_1|^j$, where the new $c_1(j)$ is a function of the old $c_1(j)$ and $c_2(j)$. With the new notation, we have that~\eqref{eq: ordered-roots} holds true in this case as well, with the new $r_2$ being the old $\rh_3$. Thus,
\begin{equation}
   \label{eq:G_in_terms_of_r}
   G_i = c_1(i) \cdot  r_1^i \pbr{1+ \sum_{k=2}^p \frac{c_k(i)}{c_1(i)}}\pbr{\frac{r_k}{r_1}}^i,
\end{equation}
where $c_k(i)/c_1(i)$ is the ratio of two polynomials and $|r_k/r_1| <1$ for all $k\geq 2$. Consequently,
\[ 
\lim_{i\to\infty} \pbr{1+ \sum_{k=2}^p \frac{c_k(i)}{c_1(i)} \pbr{\frac{r_k}{r_1}}^i} = 1.
\]
Therefore, there exists $T_{\eps}$ such that for all $i > T_{\eps}$, 
\begin{equation}
   \label{eq:largeTbehaviour}
   e^{-\eps} \leq 1+ \sum_{k=2}^p \frac{c_k(i)}{c_1(i)} \pbr{\frac{r_k}{r_1}}^i  \leq e^{\eps}.
\end{equation}
Furthermore, when $i > T_{\eps}$, there exist constants $\co_{d_1}$ and $\cu_{d_1}$ such that
\[ 
\cu_{d_1} i^{d_1-1} \leq |c_1(i)| \leq \co_{d_1} i^{d_1-1},~~~~ \forall \ |r_1| \in (0,1].
\]
Therefore,
\begin{align*}
    |G_i| \leq e^{\eps} \cdot \co_{d_1} i^{d_1-1} |r_1|^i, ~~~~ 
    |G_i| \geq e^{-\eps} \cdot \cu_{d_1} i^{d_1-1} |r_1|^i. 
\end{align*}
Let $\alpha := - \log|r_1|$, so that $|r_1|^i = e^{-\alpha i}$. Then,
\begin{align*} 
|G_i| &\leq e^{\eps} \cdot \co_{d_1} i^{d_1-1} |r_1|^i = \co_{d_1} \cdot e^{\eps} \cdot e^{(d_1-1)\log(i) - \alpha i}, \\
|G_i| &\geq e^{-\eps} \cdot \cu_{d_1} i^{d_1-1} |r_1|^i = \cu_{d_1} \cdot e^{-\eps} \cdot e^{(d_1-1)\log(i) - \alpha i}.
\end{align*}

\subsubsection{Positively correlated $\{u_i\}$ with a real positive dominant root $r_1$}
\label{subsubsection:pos_cor}
Given the system in \eqref{eq:dtsprocess}, and the fact that $\E{w_k^2}>0 \ \forall k$, one can verify that the sequence \( \{u_i\} \) exhibits positive correlation when all coefficients \( b_j \) in \eqref{eq:dtsprocess} are strictly positive.

Since the dominant root \( r_1 \) is real, the corresponding term \( c_1(i) \) is a fixed polynomial in time \( i \), with time-invariant coefficients. If \( r_1 \) is a repeated root with multiplicity \( d_1 \), then \( c_1(i) \) is a polynomial of degree \( d_1 - 1 \). As a real polynomial of degree \( d_1 - 1 \), it can have at most \( d_1 - 1 \) real zeros. Therefore, there exists a finite time \( T' \) such that \( c_1(t) \) is either strictly positive or strictly negative for all \( t > T' \). Consequently, for sufficiently large \( i \) and \( j \),
\begin{equation}
   \label{eq:c1ic1j}
   c_1(i) c_1(j) = |c_1(i) c_1(j)| > 0.
\end{equation}
From \eqref{eq:G_in_terms_of_r} it thus follows that,
\begin{align*}
   G_i G_j &= c_1(i) \cdot c_1(j) \cdot  r_1^{i+j} \cdot \pbr{\!1\!+\! \sum_{k=2}^p \frac{c_k(i)}{c_1(i)}\!\pbr{\!\frac{r_k}{r_1}\!}^i} \\ & ~~~~~~~~~~~~~~~~~~~~~~~~\cdot \pbr{\!1\!+\!\sum_{k=2}^p \frac{c_k(j)}{c_1(j)}\!\pbr{\frac{r_k}{r_1}}^j}
\end{align*}
When \( i, j \geq T_{\epsilon'},\) where \(T_{\epsilon'}:=\max\{T_\epsilon, T'\} \), it follows from \eqref{eq:largeTbehaviour} and \eqref{eq:c1ic1j}, along with the fact that \( r_1 > 0 \), that
\begin{equation}
   G_i G_j = |G_i||G_j| > 0.
\end{equation}

\subsection{General Correlation Decay}
\label{subsection:GCD}
Let $\mathcal{N}_i := \cbr{j: u_j \sim u_i}$ denote the \textit{indices} of the neighbors of $u_i$. The variance $\sigma_t^2$ can be computed as
\begin{align}
    \sigma_t^2 &= \mathbb{E}\Bigg[\sum_{i=1}^t \sum_{j\in \mathcal{N}_{i}} G_{t-i}u_i G_{t-j}u_j\Bigg] \nonumber \\
    &= \mathbb{E}\Bigg[\sum_{i=1}^t \sum_{t-j \in \mathcal{N}_{t-i}} \!\!\! G_i G_j u_{t-i} u_{t-j}\Bigg] 
    \nonumber \\
    &= \mathbb{E}\Bigg[\sum_{i=1}^t G_i^2 u_{t-i}^2 + \sum_{i=1}^t \sum_{\substack{t-j \in \mathcal{N}_{t-i} \\ j \neq i}} G_i G_j u_{t-i}u_{t-j} \Bigg] 
    \nonumber \\
    & = c(T_{\eps'}) + \mathbb{E}\Bigg[\sum_{i=T_{\eps'}}^t G_i^2 u_{t-i}^2 + \!\!\sum_{i=T_{\eps'}}^t \!\sum_{\substack{t-j \in \mathcal{N}_{t-i} \\ j \neq i}}\!\!\!\! G_i G_j u_{t-i}u_{t-j} \Bigg] 
\end{align}
Recall that, for $i>T_{\eps'}$,
\begin{align}
    G_i = \sum_{k=1}^p c_k(i) r_k^i \!= c_1(i) r_1^i \underbrace{\pbr{1 \!+\! \sum_{k=2}^p \frac{c_k(i)}{c_1(i)} \cdot \pbr{\frac{r_k}{r_1}}^i}}_{> e^{-\eps}> 0},
\end{align}
and so 
\[ 
|G_i| \geq |c_1(i)| |r_1^i| e^{-\eps} > \cu_{d_1} i^{d_1'} |r_1^i| e^{-\eps} = \cu_{d_1} e^{-\eps} e^{-\alpha i} i^{d_1'},
\]
where $\alpha = - \log|r_1|$ is positive and $d'_1:= d_1-1$. Here $d_1$ is the multiplicity of dominant pole $r_1$. Thus,
\[ 
G_i^2 \geq |G_i|^2 \geq \cu_{d_1}^2 e^{-2\eps} e^{-2\alpha i} i^{2d'_1}.
\]
Let $\su_2>0$ and $\so_2>0$ be such that $\su_2 \leq \E{u_i^2} \leq \so_2$ for all $i$. Then,
\begin{align*}
    \mathbb{E}\Bigg[ \sum_{i=T_{\eps}}^t G_i^2 u_{t-i}^2\Bigg] \geq \sum_{i=T_{\eps}^t} \cu_{d_1}^2 e^{-2\eps} e^{-2\alpha i} i^{2d_1'} \su_2.
\end{align*}

Consider the cross terms, which satisfy $|\E{u_i u_j}| \leq a \su_2$ for some $a \in (0,1)$ and $\forall i\neq j$. Note that for all $i > T_{\eps'}$,
\[ 
|G_i| \leq \cu_{d_1} e^{\eps} i^{d_1'} |r_1|^i  = \co_{d_1} e^{\eps} e^{-\alpha i} i^{d_1'},
\]
and so it follows that for all pairs $i,j > T_{\eps'}$,
\[ 
|G_i| \cdot |G_j| \cdot |\E{u_{t-i}u_{t-j}}| \leq \co_{d_1}^2 e^{2\eps} e^{-\alpha (i+j)} (ij)^{d_1'} \cdot a\su_2
\]
Thus, since $x \geq  -|x|$ for all real $x$, it follows that
\begin{align}
 \mathbb{E}\Bigg[\sum_{i=T_{\eps'}}^t & \sum_{ \substack{t-j\in\mathcal{N}_{t-i} \\ j \neq i} } \!\!\!\! G_i G_j u_{t-i} u_{t-j} \Bigg]
 \nonumber \\
 & \geq - \sum_{i=T_{\eps'}}^t \sum_{ \substack{t-j\in\mathcal{N}_{t-i} \\ j \neq i} } \co_{d_1}^2 e^{2\eps} e^{-\alpha(i+j)} (ij)^{d_1'} a\su_2 
 \nonumber \\
 & = - \sum_{i=T_{\eps'}}^t \co_{d_1}^2 e^{2\eps} a\su_2 i^{d_1'} e^{-\alpha i} \sum_{ \substack{t-j\in\mathcal{N}_{t-i} \\ j \neq i} } e^{-\alpha j } j^{d_1'} 
 \nonumber  \\
 & \geq - \sum_{i=T_{\eps'}}^t \co_{d_1}^2 e^{2\eps} a\su_2 i^{d_1'} (i+M)^{d_1'} \cdot 2M e^{\alpha M}e^{-2\alpha i},
\end{align}
where the last step uses that $|\widetilde{\mathcal{N}}_{t-i} | \leq 2M$ and that $i-M \leq j\leq i+M$. Here $\widetilde{\mathcal{N}}_{t-i}:= {\mathcal{N}}_{t-i}\backslash u_{t-i}.$ Therefore,
\begin{align}
    \sigma_t^2 &\geq c(T_{\eps'}) + \sum_{i=T_{\eps'}}^t \cu_{d_1}^2e^{-2\eps}e^{-2\alpha i} i^{2d_1'} \su_2 
    \nonumber \\
    & ~~~~ - \sum_{i=T_{\eps'}}^t \co_{d_1}^2 e^{2\eps} a\su_2 i^{d_1'}(i+M)^{d_1'} 2 M e^{\alpha M} e^{-2\alpha i} 
    \nonumber  \\
    & = c(T_{\eps'}) + \sum_{i=T_{\eps'}}^t \su_2\bigg( \cu_{d_1}^2 e^{-2\eps} i^{2d_1'} 
    \nonumber \\
    & ~~~~ - \cu_{d_1}^2 e^{2\eps} a \cdot 2M e^{\alpha M} i^{d_1'}(i+M)^{d_1'} \bigg) e^{-2\alpha i}.
\end{align}
As $\alpha \to 0$, we have
\[ 
\sigma_t^2 \geq g(t) :=  c(T_{\eps'}) + \su_2(t-T_{\eps'})c''.
\]
Note that $g(t) \to \infty$ as $t\to\infty$ if $a$ satisfies the following:
\begin{align*}
    a \leq \frac{\cu_{d_1}^2 e^{-2\eps} i^{2d_1'}}{\co_{d_1}^2 e^{2\eps} 2M e^{\alpha M} i^{d_1'}(i+M)^{d_1'}} = \frac{C i^{d_1'}}{(i+M)^{d_1'}},
\end{align*}
i.e. $a <\delta C$ where 
\[
\delta \leq \min_{i>T_{\eps'}} \frac{i^{d_1'}}{(i+M)^{d_1'}}.
\]
Such a $\delta$ exists since $(i+M)^{d_1'}$ is monotonically increasing. Then, $\sigma_t^2 \to \infty$ as $t\to\infty$ at rate $O(t)$. Proving convergence of $d_\mathcal{W}\pbr{\frac{y_t}{\sigma_t},Z}\to 0$ is now identical to the previous cases.

\end{document}